\newtheorem{definition}{Definition}
\newtheorem{proposition}[definition]{Proposition}
\newtheorem{theorem}[definition]{Theorem}
\def\squareforqed{\hbox{\rlap{$\sqcap$}$\sqcup$}}
\def\qed{\ifmmode\squareforqed\else{\unskip\nobreak\hfil
\penalty50\hskip1em\null\nobreak\hfil\squareforqed
\parfillskip=0pt\finalhyphendemerits=0\endgraf}\fi}
\def\endenv{\ifmmode\;\else{\unskip\nobreak\hfil
\penalty50\hskip1em\null\nobreak\hfil\;
\parfillskip=0pt\finalhyphendemerits=0\endgraf}\fi}
\newenvironment{proof}{\noindent \textbf{{Proof~} }}{\qed}
\newenvironment{remark}{\noindent \textbf{{Remark~}}}{\qed}
\mathchardef\ordinarycolon\mathcode`\:
\def\vcentcolon{\mathrel{\mathop\ordinarycolon}}
\newcommand{\nc}{\newcommand}
\nc{\rnc}{\renewcommand}
\nc{\beg}{\begin{equation}}
\nc{\eeq}{{\end{equation}}}
\nc{\beqa}{\begin{eqnarray}}
\nc{\eeqa}{\end{eqnarray}}
\nc{\lbar}[1]{\overline{#1}}
\nc{\bra}[1]{\langle#1|}
\nc{\ket}[1]{|#1\rangle}
\nc{\ketbra}[2]{|#1\rangle\!\langle#2|}
\nc{\braket}[2]{\langle#1|#2\rangle}
\nc{\proj}[1]{| #1\rangle\!\langle #1 |}
\nc{\avg}[1]{\langle#1\rangle}
\nc{\Rank}{\operatorname{Rank}}
\nc{\smfrac}[2]{\mbox{$\frac{#1}{#2}$}}
\nc{\tr}{\operatorname{Tr}}
\nc{\ox}{\otimes}
\nc{\dg}{\dagger}
\nc{\dn}{\downarrow}
\nc{\cA}{{\cal A}}
\nc{\cB}{{\cal B}}
\nc{\cC}{{\cal C}}
\nc{\cD}{{\cal D}}
\nc{\cE}{{\cal E}}
\nc{\cF}{{\cal F}}
\nc{\cG}{{\cal G}}
\nc{\cH}{{\cal H}}
\nc{\cI}{{\cal I}}
\nc{\cJ}{{\cal J}}
\nc{\cK}{{\cal K}}
\nc{\cL}{{\cal L}}
\nc{\cM}{{\cal M}}
\nc{\cN}{{\cal N}}
\nc{\cO}{{\cal O}}
\nc{\cP}{{\cal P}}
\nc{\cQ}{{\cal Q}}
\nc{\cR}{{\cal R}}
\nc{\cS}{{\cal S}}
\nc{\cT}{{\cal T}}
\nc{\cX}{{\cal X}}
\nc{\cY}{{\cal Y}}
\nc{\cZ}{{\cal Z}}
\nc{\cW}{{\cal W}}
\nc{\csupp}{{\operatorname{csupp}}}
\nc{\qsupp}{{\operatorname{qsupp}}}
\nc{\var}{{\operatorname{var}}}
\nc{\rar}{\rightarrow}
\nc{\lrar}{\longrightarrow}
\nc{\polylog}{{\operatorname{polylog}}}
\nc{\wt}{{\operatorname{wt}}}
\nc{\av}[1]{{\left\langle {#1} \right\rangle}}
\nc{\supp}{{\operatorname{supp}}}
\def\a{\alpha}
\def\G{\Gamma}
\nc{\RR}{{{\mathbb R}}}
\nc{\CC}{{{\mathbb C}}}
\nc{\FF}{{{\mathbb F}}}
\nc{\NN}{{{\mathbb N}}}
\nc{\ZZ}{{{\mathbb Z}}}
\nc{\PP}{{{\mathbb P}}}
\nc{\QQ}{{{\mathbb Q}}}
\nc{\UU}{{{\mathbb U}}}
\nc{\EE}{{{\mathbb E}}}
\nc{\id}{{\operatorname{id}}}
\nc{\CHSH}{{\operatorname{CHSH}}}
\nc{\be}{\begin{equation}}
\nc{\ee}{{\end{equation}}}
\nc{\bea}{\begin{eqnarray}}
\nc{\eea}{\end{eqnarray}}
\nc{\Hom}[2]{\mbox{Hom}(\CC^{#1},\CC^{#2})}
\nc{\rU}{\mbox{U}}
\nc{\ob}[1]{#1}
\nc{\SEP}{{\text{SEP}}}
\nc{\NS}{{\text{NS}}}
\nc{\LOCC}{{\text{LOCC}}}
\nc{\PPT}{{\text{PPT}}}
\nc{\EXT}{{\text{EXT}}}
\nc{\Sym}{{\operatorname{Sym}}}
\nc{\ERLO}{{E_{\text{r,LO}}}}
\nc{\ERLOCC}{{E_{\text{r,LOCC}}}}
\nc{\ERPPT}{{E_{\text{r,PPT}}}}
\nc{\ERLOCCinfty}{{E^{\infty}_{\text{r,LOCC}}}}
\nc{\Aram}{{\operatorname{\sf A}}}
\begin{document}
\title{Improved Semidefinite Programming Upper Bound on Distillable Entanglement}
 
\author{Xin Wang$^{1}$}
\email{xin.wang-8@student.uts.edu.au}
\author{Runyao Duan$^{1,2}$}
\email{runyao.duan@uts.edu.au}

\affiliation{$^1$Centre for Quantum Software and Information,\\ Faculty of Engineering and Information Technology,\\ University of Technology Sydney, NSW 2007, Australia}
\affiliation{$^2$UTS-AMSS Joint Research Laboratory for Quantum Computation and Quantum Information Processing, Academy of Mathematics and Systems Science, \\Chinese Academy of Sciences, Beijing 100190, China}

\begin{abstract}
A new additive and semidefinite programming (SDP) computable entanglement measure is introduced to upper bound the amount of distillable entanglement in bipartite quantum states by operations completely preserving the positivity of partial transpose (PPT). This quantity is always smaller than or equal to the logarithmic negativity, the previously best known SDP bound on distillable entanglement, and the inequality is strict in general. Furthermore, a succinct SDP characterization of the one-copy PPT deterministic distillable entanglement for any given state is also obtained, which provides a simple but useful lower bound on the PPT distillable entanglement. Remarkably, there is a genuinely mixed state of which both bounds coincide with the distillable entanglement, while being strictly less than the logarithmic negativity.
\end{abstract}
\maketitle

\textit{Introduction:} 
Entanglement is a striking feature of quantum physics and is a key resource in quantum information processing tasks. A quantitative theory is highly desirable in order to fully exploit the power of entanglement. A series of remarkable  efforts have been devoted both to classifying and quantifying entanglement in the last two decades (for reviews see, e.g., Refs.  \cite{Plenio2005a, Horodecki2009a}). 

One basic entanglement measure is the entanglement of distillation, denoted by $E_D$, which characterizes the rate at which one can obtain maximally entangled states from an entangled state by local operations and classical communication (LOCC) \cite{Bennett1996a, Rains1999a}.  $E_D$ is an important measure because if entanglement is used in a two party protocol, then it is usually required to be in the form of maximally entangled states, e.g., super-dense coding \cite{Bennett1992}
and teleportation \cite{Bennett1993}, and $E_D$ fully captures the ability of a given state to generate \textit{standard} maximally entangled state. Entanglement distillation is also essential for quantum cryptography \cite{Gisin2002} and quantum error-correction \cite{Bennett1996a}. However, how to evaluate $E_D$ for general quantum states remains unknown. 

To quantify bipartite quantum correlations, one of the most popular tools is \emph{negativity} introduced in Ref. \cite{Zyczkowski1998} and it was shown to be an entanglement monotone in Refs. \cite{Vidal2002, Eisert2006, Plenio2005b}.  A more suitable tool is the so-called \emph{logarithmic negativity} $E_N$ \cite{Vidal2002, Plenio2005b}, which remains   the best known semidefinite programming (SDP) computable upper bound on $E_D$ so far  \cite{Rains2001, Vidal2002}. 
 Rains' bound  proposed in Ref. \cite{Rains2001} is the best known upper bound on $E_D$ but recently it is found to be nonadditive \cite{WD16}. Other known upper bounds of $E_D$ have been studied in Refs. \cite{Vedral1998, Rains1999, Horodecki2000a, Christandl2004}. Unfortunately, most of these known entanglement measures are difficult to compute \cite{Huang2014} and usually easily computable only for states with high symmetries, such as Werner states, isotropic states, or the family of ``iso-Werner'' states \cite{Bennett1996a, Vollbrecht2001,Terhal2000, Rains1999}. Thus it is of great interest and significance to find entanglement monotones which are easy to compute for general states.

In this paper we introduce an efficiently computable entanglement measure $E_W$ with an operational interpretation as an improved upper bound on the distillable entanglement, thereby significantly advancing the study of entanglement measures. This quantity is an additive entanglement monotone under both LOCC and a broader class of operations completely preserving the positivity of partial transpose (PPT), and vanishes for the so-called PPT states. For estimating the distillable entanglement, $E_W$ behaves better than the logarithmic negativity. Interestingly, for some states, $E_W$ is equal to the PPT distillable entanglement. With these pleasant properties, $E_W$ is arguably the best known computable and additive entanglement monotone so far. Finally, we obtain an explicit SDP to compute the one-copy PPT deterministic distillable entanglement, which directly provides a computable lower bound of the PPT distillable entanglement.

Before we present our main results, let us first review some notations and preliminaries. In the following we will frequently use symbols such as $A$ (or $A'$) and $B$ (or $B'$) to denote (finite-dimensional) Hilbert spaces associated with Alice and Bob, respectively. The set of linear operators over $A$ is denoted by $\cL(A)$. Note that for a linear operator $R$ over a Hilbert space, we define $|R|=\sqrt{R^\dagger R}$, and the trace norm of $R$ is given by $\|R\|_1=\tr |R|$, where $R^\dagger$ is the conjugate transpose of $R$. The operator norm $\|R\|_\infty$ is defined as the maximum eigenvalue of $|R|$. A deterministic quantum operation (quantum channel) $\cN$ from $A'$ to $B$ is simply a completely positive and trace-preserving (CPTP) linear map from $\cL(A')$ to $\cL(B)$. The Choi-Jamio\l{}kowski matrix of $\cN$ is given by $J_{AB}=\sum_{ij} \ketbra{i_A}{j_{A}} \ox \cN(\ketbra{i_{A'}}{j_{A'}})$, where $\{\ket{i_A}\}$ and $\{\ket{i_{A'}}\}$ are orthonormal basis on isomorphic Hilbert spaces $A$ and $A'$, respectively. 
A positive semidefinite operator $E_{AB} \in \cL(A\ox B)$ is
said to be PPT if $E_{AB}^{T_{B}}\geq 0$, i.e.,
$(\ketbra{i_Aj_B}{k_Al_B})^{T_{B}}=\ketbra{i_Al_B}{k_Aj_B}$.
A bipartite operation $\Pi:\cL(A_i\ox B_i)\rightarrow \cL(A_o\ox B_o)$ is said to be a PPT operation if its Choi-Jamio\l{}kowski matrix is PPT. Separable operations can be defined similarly. A well known fact is that the classes of PPT operations, separable
operations (SEP) \cite{Rains2001} and LOCC obey the following
strict inclusions \cite{Bennett1999b},
\begin{equation}\label{LOCC SEP PPT}
\text{LOCC} \subsetneq  \text{SEP} \subsetneq \text{PPT}.
\end{equation}

The concise definition of entanglement of distillation by LOCC is given in Ref. \cite{Plenio2005a} as follows:
$$E_D(\rho_{AB})=\sup\{r: \lim_{n \to \infty} [\inf_\Lambda  \|\Lambda(\rho_{AB}^{\ox n})- \Phi(2^{rn})\|_1]=0\},$$
where $\Lambda$ ranges over LOCC operations and $\Phi(d)=1/d\sum_{i,j=1}^d\ketbra{ii}{jj}$ represents the standard $d\otimes d$ maximally entangled state.  This can also be used to define the PPT distillable entanglement $E_{\Gamma}(\rho_{AB})$ by replacing LOCC with PPT operations.

In Ref. \cite{Rains2001}, Rains studied entanglement distillation assisted with PPT operations and obtained an upper bound on the distillable entanglement. In deriving this bound, he introduced the ``fidelity of $k$-state PPT distillation'' by
\begin{equation}
    F_{\Gamma}(\rho_{AB},k)
    := \max \{ \tr \Phi(k)
\Pi(\rho_{AB}) :
   \Pi \in \text{PPT}\}，
\end{equation}
which is the optimal entanglement fidelity
of $k \otimes k$ maximally entangled states one can obtain from
$\rho_{AB}$ by PPT operations. Rains simplified $F_{\Gamma}(\rho_{AB},k)$ to
\begin{equation}\begin{split}\label{F PPT}
   F_{\Gamma}(\rho_{AB}&,k)= \max   \tr \rho_{AB}Q_{AB}, \\ 
   \text{ s.t. }\   &0\le Q_{AB} \le \1, 
   -\frac{1}{k}\1 \le Q_{AB}^{T_{B}} \le  \frac{1}{k}\1.
\end{split}\end{equation}
And the PPT distillable entanglement can be equivalently defined as
\begin{equation}
    E_{\Gamma}(\rho_{AB}):=
    \sup \{ r : \lim_{n \to \infty}
      F_{\Gamma}(\rho_{AB}^{\ox n}, 2^{nr}) = 1 \}. 
\end{equation}

The \emph{logarithmic negativity} of a state $\rho_{AB}$ mentioned above is defined as \cite{Vidal2002, Plenio2005b} 
\begin{equation}\label{EN}
E_N(\rho_{AB})=\log_2 \|\rho_{AB}^{T_B}\|_1.
\end{equation}
As shown in Refs. \cite{Rains2001, Vidal2002}, the significance of $E_N$ is highlighted in the following
$$E_D(\rho_{AB})\le  E_{\Gamma}(\rho_{AB}) \le E_{N}(\rho_{AB}).$$

The entanglement monotone is one of the most essential features for a function to quantify the entanglement.
Any (non-negative) function $E(\cdot)$ over bipartite states is said to be an entanglement monotone if it does not increase on average under general LOCC  (or PPT) operations \cite{Plenio2005b}, i.e.,
\begin{equation}\label{E monotone}
E(\rho)\ge \sum_ip_iE(\rho_i),
\end{equation}
where state $\rho_i$ with label $i$ is obtained with probability $p_i$ in the LOCC (PPT) protocol applied to $\rho$. 

SDP problems \cite{Vandenberghe1996} can be solved by polynomial time algorithms \cite{Khachiyan1980}. The CVX software  \cite{Grant2008} allows one to solve SDPs efficiently. More details on this topic can be found in Ref. \cite{Watrous2011b}. Here clearly $F_{\Gamma}(\rho_{AB},k)$ is SDP computable for any state $\rho_{AB}$ and positive real number $k$ (not necessary to be integers). However, it remains unclear whether $E_{\Gamma}(\rho_{AB})$ is also SDP computable due to the complicated limiting procedure in the definition. Interestingly, for any bipartite pure state $E_\G$ coincides with the entropy of entanglement \cite{Matthews2008}.


\textit{A new SDP upper bound on distillable entanglement:}  
We are now ready to introduce an SDP upper bound $E_W$ on $E_{\Gamma}$ and thus also on $E_D$, as follows: 
$$E_W(\rho_{AB})=\log_2 W(\rho_{AB}),$$ where $W(\rho_{AB})$  is given by the following SDP:
\begin{equation}\label{prime WN}
\begin{split}
W(\rho_{AB})= \max   &\tr  \rho_{AB}^{T_B}R_{AB}, \\
\phantom{ W(\rho) }\text{ s.t. }\   & \|R_{AB}\|_\infty \le  1, R_{AB}^{T_B}\ge0. 
\end{split}\end{equation}
Noticing that the constraint $\|R_{AB}\|_\infty \le  1$ can be rewritten as $-\1\le R_{AB} \le  \1$, we can use the Lagrange multiplier approach to obtain the dual SDP as follows:
\begin{equation}\label{dual WN}
\begin{split}
W(\rho_{AB})&= \min   \tr(U_{AB}+V_{AB}), \\
\phantom{ W(\rho) }  \text{ s.t. }\   &U_{AB},V_{AB}\ge0, (U_{AB}-V_{AB})^{T_{B}}  \ge \rho_{AB}.
\end{split}\end{equation}
It is worth noting that the optimal values of the primal and the dual SDPs above coincide. This is a consequence of strong duality. By Slater's condition \cite{Slater2014}, one simply needs to show that there exists positive definite $U_{AB}$ and $V_{AB}$ such that $(U_{AB}-V_{AB})^{T_{B}} > \rho_{AB}$, which holds for $U_{AB}=3V_{AB}=3\1$.
Introducing a new variable operator $X_{AB}=(U_{AB}-V_{AB})^{T_B}$, we can further simplify the dual SDP to
\begin{equation}\label{dual 2 WN}
W(\rho_{AB})= \min \|X_{AB}^{T_B}\|_1, \ \text{ s.t. }\    X_{AB}  \ge \rho_{AB}.
\end{equation}

The function $E_W(\cdot)$ has the following remarkable properties which will be discussed in greater details shortly:
\begin{itemize}
\item[i)] Additivity under tensor product: $E_W(\rho_{AB}\ox \sigma_{A'B'})=E_W(\rho_{AB})+E_W(\sigma_{A'B'})$.
\item[ii)] Upper bound on PPT distillable entanglement: $E_{\Gamma}(\rho_{AB})\le E_W(\rho_{AB})$. 
\item[iii)] Detecting genuine PPT distillable entanglement: $E_W(\rho_{AB})>0$ if and only if $\rho_{AB}$ is PPT distillable.
\item[iv)] Entanglement monotone under general LOCC (or PPT) operations:
$E_W(\rho)\ge \sum_ip_iE_W(\rho_i)$.
\item [v)] Improved bound over logarithmic negativity: $E_W(\rho_{AB})\le E_N(\rho_{AB})$, and the inequality can be strict.
\end{itemize}

In the rest of this section we will focus on properties i) to iii). Properties iv) and v) will be discussed in the subsequent sections.

Property i) is equivalent to the multiplicativity of the function $W(\cdot)$ under tensor product and can be proven directly by using the primal and dual SDPs of $W(\cdot)$. To see the super-multiplicativity, suppose that the optimal solutions to the primal SDP (\ref{prime WN}) of $W(\rho_{AB})$ and $W(\sigma_{A'B'})$ are  $R_{AB}$ and $S_{A'B'}$, respectively.  
We need to show that $R_{AB}\ox S_{A'B'}$ is a feasible solution to the primal  SDP (\ref{prime WN}) of $W(\rho_{AB} \ox \sigma_{A'B'})$. That will imply $W(\rho_{AB} \ox \sigma_{A'B'})\ge \tr (\rho_{AB}^{T_B}\ox \sigma_{A'B'}^{T_{B'}})(R_{AB}\ox S_{A'B'})= W(\rho_{AB})W(\sigma_{A'B'})$. The proof is quite straightforward. Indeed from  $\|R_{AB}\|_\infty\leq 1$ and $\|S_{A'B'}\|_\infty\leq 1$, $\|R_{AB}\ox S_{A'B'}\|_\infty\leq 1$ follows immediately. Also the positivity of $R_{AB}^{T_B}\otimes S_{A'B'}^{T_{B'}}$ is obvious. Hence we are done. The sub-multiplicativity of $W(\cdot)$ can be proven similarly refer to the dual SDP (\ref{dual 2 WN}) of $W(\rho_{AB})$.

Property ii) requires some effort and is presented in the following
\begin{theorem}\label{DPPT EW}
For any state $\rho_{AB}$, $E_{\Gamma}(\rho_{AB})\le E_W(\rho_{AB}).$
\end{theorem}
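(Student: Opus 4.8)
The plan is to reduce the asymptotic statement to a single-copy comparison of the two semidefinite programs. The central step is to establish the one-shot inequality
$$F_{\Gamma}(\rho_{AB},k)\le \frac{W(\rho_{AB})}{k}$$
for every $k>0$, after which the multiplicativity of $W(\cdot)$ already proven in property i) and the limiting definition of $E_{\Gamma}$ close the argument immediately.

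To prove the one-shot bound I would first recast the primal SDP (\ref{prime WN}) for $W$ into a form directly comparable with the SDP (\ref{F PPT}) for $F_{\Gamma}$. Using $\tr \rho_{AB}^{T_B}R_{AB}=\tr \rho_{AB}R_{AB}^{T_B}$ and substituting $S_{AB}=R_{AB}^{T_B}$ (so that $R_{AB}=S_{AB}^{T_B}$), the constraints $\|R_{AB}\|_\infty\le 1$ and $R_{AB}^{T_B}\ge0$ turn into $-\1\le S_{AB}^{T_B}\le \1$ and $S_{AB}\ge0$, giving
$$W(\rho_{AB})=\max\{\tr \rho_{AB}S_{AB}:\ S_{AB}\ge0,\ -\1\le S_{AB}^{T_B}\le \1\}.$$
Now let $Q_{AB}$ be any feasible point for $F_{\Gamma}(\rho_{AB},k)$ and set $S_{AB}=kQ_{AB}$. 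Then $S_{AB}\ge0$ follows from $Q_{AB}\ge0$, while $-\frac{1}{k}\1\le Q_{AB}^{T_B}\le \frac{1}{k}\1$ yields $-\1\le S_{AB}^{T_B}\le \1$; hence $S_{AB}$ is feasible for $W$ (the extra constraint $Q_{AB}\le\1$ is simply discarded, which only enlarges the feasible set and is harmless for an upper bound). Since $\tr \rho_{AB}S_{AB}=k\tr \rho_{AB}Q_{AB}$, maximizing over $Q_{AB}$ gives $kF_{\Gamma}(\rho_{AB},k)\le W(\rho_{AB})$.

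To finish, I would apply the one-shot bound to $\rho_{AB}^{\ox n}$ with $k=2^{nr}$ and invoke $W(\rho_{AB}^{\ox n})=W(\rho_{AB})^n$ from property i), together with $E_W(\rho_{AB})=\log_2 W(\rho_{AB})$, obtaining
$$F_{\Gamma}(\rho_{AB}^{\ox n},2^{nr})\le 2^{n(E_W(\rho_{AB})-r)}.$$
Whenever $r>E_W(\rho_{AB})$ the right-hand side tends to $0$, so the fidelity cannot converge to $1$ and $r$ is not an achievable rate; taking the supremum over achievable $r$ yields $E_{\Gamma}(\rho_{AB})\le E_W(\rho_{AB})$.

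I expect the main obstacle to be precisely the one-shot comparison, since the two programs have genuinely different constraint structures (an operator upper bound $Q_{AB}\le\1$ on one side versus the positivity $R_{AB}^{T_B}\ge0$ on the other). The essential observation is that after the substitution $S_{AB}=R_{AB}^{T_B}$ the partial-transpose bounds of the two SDPs coincide up to the scalar factor $k$, so that a feasible distillation operator rescales exactly into a feasible $W$-operator; once this matching is recognized, the remaining steps are routine.
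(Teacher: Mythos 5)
Your proposal is correct and follows essentially the same route as the paper: the one-shot bound $kF_{\Gamma}(\rho_{AB},k)\le W(\rho_{AB})$ is obtained by the same rescaling (your substitution $S_{AB}=kQ_{AB}$ after the change of variables $S_{AB}=R_{AB}^{T_B}$ is identical to the paper's choice $R_{AB}=kQ_{AB}^{T_B}$, and both likewise discard the constraint $Q_{AB}\le\1$), followed by multiplicativity of $W$ and the limiting definition of $E_{\Gamma}$. Your contrapositive phrasing of the final step (any $r>E_W$ forces the fidelity to $0$, hence is not achievable) is, if anything, marginally tidier than the paper's direct argument, which implicitly assumes the supremum defining $E_{\Gamma}$ is attained.
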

\begin{proof}
Suppose $E_{\Gamma}(\rho_{AB})=r$. Then $$\lim_{n \to \infty}      F_{\Gamma}(\rho_{AB}^{\ox n}, 2^{nr}) = 1.$$

For a given $k$, suppose that the optimal solution to the SDP (\ref{F PPT}) of $F_{\Gamma}(\rho_{AB}, k)$ is $Q_{AB}$. 
Let $R_{AB}=kQ_{AB}^{T_{B}}$. Then from the constraints of SDP (\ref{F PPT}), 
we have that $-\1 \le R_{AB}=kQ_{AB}^{T_B} \le  \1$. It is also clear that $R_{AB}^{T_{B}}\ge 0$. So $R_{AB}$ is a feasible solution to the primal SDP (\ref{prime WN}) of $W(\rho_{AB})$. Therefore,  
$$W(\rho_{AB}) \ge \tr \rho_{AB}^{T_{B}}R_{AB}=k\tr \rho_{AB}Q_{AB}=kF_{\Gamma}(\rho_{AB}, k).$$
Hence, $$\lim_{n \to \infty}   {W(\rho_{AB}^{\ox n})}/{2^{nr}}\ge\lim_{n \to \infty}     F_{\Gamma}(\rho_{AB}^{\ox n}, 2^{nr}) =1.$$
Noticing that $W(\rho)$ is multiplicative, we have
$$\lim_{n \to \infty}    {{W(\rho_{AB}^{\ox n})}}/{2^{nr}}=\lim_{n \to \infty}   {(W(\rho_{AB}))^n}/{2^{nr}}\ge1.$$
Therefore, $W(\rho_{AB}) \ge 2^r$, and we are done.
\end{proof}

Property iii) suggests an interesting equivalent relation between $E_W$ and $E_\G$ in the sense that $E_W$ can be used to detect whether a state is genuinely distillable under PPT operations.
\begin{theorem}
For a state $\rho_{AB}$, $E_W(\rho_{AB})>0$ if and only if $E_\G(\rho_{AB})>0$.
\end{theorem}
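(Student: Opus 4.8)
The plan is to prove the two implications separately, relying on Theorem~\ref{DPPT EW} for one and on a structural analysis of the SDP~(\ref{prime WN}) combined with a known distillability criterion for the other. The ``only if'' direction is immediate: Theorem~\ref{DPPT EW} already gives $E_\G(\rho_{AB}) \le E_W(\rho_{AB})$, so $E_\G(\rho_{AB}) > 0$ forces $E_W(\rho_{AB}) > 0$. All the work is in the converse.

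For the converse I would first pin down exactly when $W(\rho_{AB}) = 1$, since $E_W = \log_2 W$ means $E_W(\rho_{AB}) > 0$ is equivalent to $W(\rho_{AB}) > 1$. The operator $R_{AB} = \1$ is always feasible for~(\ref{prime WN}), as $\|\1\|_\infty \le 1$ and $\1^{T_B} = \1 \ge 0$, and it attains objective value $\tr \rho_{AB}^{T_B} = 1$; hence $W(\rho_{AB}) \ge 1$ for every state. Next I would show that PPT-ness forces equality: if $\rho_{AB}^{T_B} \ge 0$, then for any feasible $R_{AB} \le \1$ we have $\1 - R_{AB} \ge 0$, so $\tr \rho_{AB}^{T_B}(\1 - R_{AB}) \ge 0$ because the trace of a product of two positive semidefinite operators is nonnegative. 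This yields $\tr \rho_{AB}^{T_B} R_{AB} \le \tr \rho_{AB}^{T_B} = 1$, whence $W(\rho_{AB}) = 1$ and $E_W(\rho_{AB}) = 0$. Contrapositively, $E_W(\rho_{AB}) > 0$ implies that $\rho_{AB}$ is NPT, i.e.\ $\rho_{AB}^{T_B} \not\ge 0$.

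It then remains to connect NPT-ness to PPT distillability, and here I would invoke the known result (Eggeling, Vollbrecht, Werner, and Wolf) that a bipartite state is distillable under PPT operations if and only if it is NPT; in the present notation this reads $E_\G(\rho_{AB}) > 0 \iff \rho_{AB}^{T_B} \not\ge 0$. Chaining the implications, $E_W(\rho_{AB}) > 0$ gives $\rho_{AB}$ NPT, which gives $E_\G(\rho_{AB}) > 0$, completing the proof.

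The genuinely routine part is the SDP feasibility computation establishing $W(\rho_{AB}) = 1$ for PPT states; note it uses only $R_{AB} \le \1$ and $\rho_{AB}^{T_B} \ge 0$, so the constraint $R_{AB}^{T_B} \ge 0$ is not even needed there. The main obstacle is the implication that every NPT state is PPT-distillable: this is the deep external input, and without it one would instead have to construct directly, for some number of copies $n$, a feasible $Q$ in~(\ref{F PPT}) achieving $F_{\Gamma}(\rho_{AB}^{\ox n}, 2) > 1/2$ from a negative eigenvalue of $\rho_{AB}^{T_B}$, which is precisely what that cited result supplies. I would also note in passing that the first step can be upgraded to the full equivalence $W(\rho_{AB}) > 1 \iff \rho_{AB}$ NPT by exhibiting a feasible $R_{AB}$ with $\tr \rho_{AB}^{T_B} R_{AB} > 1$ in the NPT case, but this stronger statement is not required for the theorem.
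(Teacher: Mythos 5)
Your proof is correct, but it takes a genuinely leaner route than the paper for one of the two directions, so a comparison is worthwhile. The paper proves the stronger standalone equivalence $W(\rho_{AB})>1 \iff \rho_{AB}$ is NPPT: the PPT half is obtained by feeding $X_{AB}=\rho_{AB}$ into the dual SDP~(\ref{dual 2 WN}) (giving $W(\rho_{AB})\le\|\rho_{AB}^{T_B}\|_1=1$), and the NPPT half by an explicit primal witness $R_{AB}=\1-\frac{1}{\max\{\lambda,0.5\}}P_-$, where $P_-$ projects onto the negative eigenspace of $\rho_{AB}^{T_B}$ and $\lambda=\|P_-^{T_B}\|_\infty$, yielding $\tr\rho_{AB}^{T_B}R_{AB}>1$; combining this equivalence with the Eggeling--Vollbrecht--Werner--Wolf result then gives both directions of the theorem. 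You instead dispose of the direction $E_\G(\rho_{AB})>0\Rightarrow E_W(\rho_{AB})>0$ in one line via Theorem~\ref{DPPT EW}, and for the converse you only need PPT $\Rightarrow W=1$, which you prove by a primal argument ($\tr\rho_{AB}^{T_B}(\1-R_{AB})\ge 0$ for any feasible $R_{AB}$, using positivity of both factors) rather than the paper's dual-feasibility argument, before invoking the same external distillability result. Your version is more economical---it never needs the $P_-$ witness---whereas the paper's construction buys the sharper fact that $E_W$ detects NPPT-ness exactly, independently of Theorem~\ref{DPPT EW}; you correctly flag this upgrade as possible but unnecessary. Both arguments lean on the same deep input, that every NPPT state is PPT-distillable. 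One cosmetic slip: the direction you label ``only if'' ($E_\G>0\Rightarrow E_W>0$) is actually the ``if'' direction of the stated biconditional; the labels should be swapped, though the mathematics is unaffected.
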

\begin{proof}
We only need to show that $W(\rho_{AB})>1$ is equivalent to $\rho_{AB}$ is an non-positive partial transpose (NPPT) state. The rest of the  proof then can be completed by combining this fact with an interesting result  from Ref. \cite{Eggeling2001}: Any NPPT state is PPT distillable.

Firstly, if $\rho_{AB}$ is PPT, then $W(\rho_{AB})\le \|\rho_{AB}^{T_B} \|_1=1$.  Assume now $\rho_{AB}$ is NPPT, we will show that $W(\rho_{AB})>1$.  Let $P_{-}$ be the projection on the subspace spanned by the eigenvectors with negative eigenvalues of $\rho_{AB}^{T_B}$, and let $\lambda=\|P_{-}^{T_B}\|_\infty$. Introduce 
$$R_{AB}=\1_{AB}-\frac{1}{\max\{\lambda,0.5\}}P_{-}.$$ 
It is clear that $R_{AB}^{T_B}\ge 0$ by construction.  Furthermore, we can easily verify that $-\1\le\1-2P_{-}\le R_{AB}\le \1$. So $R_{AB}$  is a feasible solution to the primal  SDP (\ref{prime WN}) of $W(\rho_{AB})$. Noticing that $\rho_{AB}$ is NPPT, we have that
\begin{align*}
W(\rho_{AB}) &\ge\tr \rho_{AB}^{T_B} R_{AB}
=1-\frac{\tr P_{-}\rho_{AB}^{T_B}}{\max\{\lambda,0.5\}}>1,
\end{align*}
where we have used the property that $\tr P_{-}\rho_{AB}^{T_B}<0$.
\end{proof}

\textit{$E_W$ is an entanglement monotone:} 
We are going to prove that $E_W$  is a proper entanglement monotone in the sense of Eq. (\ref{E monotone})  under general PPT operations, and then it implies monotonicity for LOCC. The approach is in the spirit of the proof of the monotonicity of  logarithmic negativity in Ref. \cite{Plenio2005b}.
\begin{theorem}
The function $E_W(\cdot)$ is an entanglement monotone both under general LOCC and PPT operations. 
\end{theorem}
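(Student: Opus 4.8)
The plan is to reduce everything to PPT operations (since $\text{LOCC}\subsetneq\text{PPT}$, monotonicity under PPT immediately yields monotonicity under LOCC) and then to follow the strategy used for the logarithmic negativity in Ref.~\cite{Plenio2005b}: first prove that the un-logged quantity $W(\cdot)$ obeys the convex (average) monotonicity $\sum_i p_i W(\rho_i)\le W(\rho_{AB})$, and then promote this to $E_W$ using concavity of the logarithm. Recall that a selective PPT operation is a family $\{\Pi_i\}$ of completely positive, PPT-preserving maps (i.e.\ $\widetilde\Pi_i:=\Gamma\circ\Pi_i\circ\Gamma$ is again completely positive, where $\Gamma(\cdot)=(\cdot)^{T_B}$) whose sum $\sum_i\Pi_i$ is trace preserving; outcome $i$ occurs with probability $p_i=\tr\Pi_i(\rho_{AB})$ and leaves the state $\rho_i=\Pi_i(\rho_{AB})/p_i$.

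The heart of the argument is the inequality $\sum_i p_i W(\rho_i)\le W(\rho_{AB})$, which I would establish through the dual SDP~(\ref{dual 2 WN}). Let $X_{AB}$ be an optimal feasible point for $W(\rho_{AB})$, so $X_{AB}\ge\rho_{AB}$ and $W(\rho_{AB})=\|X_{AB}^{T_B}\|_1$. Because each $\Pi_i$ is completely positive it is in particular positive, so $\Pi_i(X_{AB})\ge\Pi_i(\rho_{AB})=p_i\rho_i$; hence $X_i:=\Pi_i(X_{AB})/p_i$ is feasible for $W(\rho_i)$ and gives $p_iW(\rho_i)\le\|(\Pi_i(X_{AB}))^{T_B}\|_1=\|\widetilde\Pi_i(X_{AB}^{T_B})\|_1$, using $\Gamma\circ\Gamma=\id$. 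Now decompose $X_{AB}^{T_B}=P-Q$ into its positive and negative parts, so that $\|X_{AB}^{T_B}\|_1=\tr(P+Q)$. Since $\widetilde\Pi_i$ is completely positive it maps $P,Q\ge0$ to positive operators, so by the triangle inequality $\|\widetilde\Pi_i(X_{AB}^{T_B})\|_1\le\tr\widetilde\Pi_i(P)+\tr\widetilde\Pi_i(Q)=\tr\widetilde\Pi_i(P+Q)$. Summing over $i$ and using that partial transpose preserves the trace together with trace preservation of $\sum_i\Pi_i$ yields $\sum_i\tr\widetilde\Pi_i(P+Q)=\tr\big(\textstyle\sum_i\Pi_i\big)\big(\Gamma(P+Q)\big)=\tr(P+Q)=\|X_{AB}^{T_B}\|_1=W(\rho_{AB})$, which is precisely the claimed bound.

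Finally, concavity of $\log_2$ and Jensen's inequality give
\begin{equation*}
\sum_i p_i E_W(\rho_i)=\sum_i p_i\log_2 W(\rho_i)\le\log_2\Big(\sum_i p_i W(\rho_i)\Big)\le\log_2 W(\rho_{AB})=E_W(\rho_{AB}),
\end{equation*}
establishing~(\ref{E monotone}). I expect the main obstacle to be the convex-monotonicity step for $W$: one must correctly exploit that the branch maps $\Pi_i$ are simultaneously completely positive \emph{and} PPT-preserving, so that conjugating by $\Gamma$ preserves complete positivity, and one must handle the fact that $X_{AB}^{T_B}$ is generally indefinite via the positive/negative decomposition. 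The concluding logarithm step is routine, and it is precisely what allows a genuinely non-convex quantity such as $E_W$ to still qualify as a full entanglement monotone.
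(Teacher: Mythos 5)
Your proof is correct and follows essentially the same route as the paper's: both pass to the dual SDP (\ref{dual 2 WN}), push the optimal $X_{AB}$ through each completely positive branch map $\Pi_i$ to obtain feasible points for $W(\rho_i)$, exploit that $\widetilde\Pi_i=\Gamma\circ\Pi_i\circ\Gamma$ is completely positive (the PPT condition) so that $\sum_i\|\widetilde\Pi_i(X_{AB}^{T_B})\|_1\le \tr|X_{AB}^{T_B}|$ by trace preservation, and conclude with concavity of the logarithm. The only cosmetic difference is that you prove the intermediate step $\|\widetilde\Pi_i(X_{AB}^{T_B})\|_1\le\tr\widetilde\Pi_i(|X_{AB}^{T_B}|)$ explicitly via the positive/negative decomposition and the triangle inequality, whereas the paper asserts it directly from complete positivity.
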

\begin{proof}
Let us consider a general PPT operation $\cN=\sum_i \cN_i$ that maps the bipartite state $\rho$ to $\cN_i(\rho)/\tr(\cN_i(\rho))$ with probability $\tr \cN_i(\rho)$, where $\cN_i$ is CP and PPT operation.

We suppose that $X_{AB}$  is the optimal solution to the dual SDP (\ref{dual 2 WN}) of $W(\rho_{AB})$. It is easy to see that $\cN_i(X_{AB})\ge \cN_i(\rho)$, then $\cN_i(X_{AB})$ is feasible to the dual SDP (\ref{dual 2 WN})  of $W(\cN_i(\rho))$. Therefore, 
$$W(\cN_i(\rho))\le \|(\cN_i(X_{AB}))^{T_B}\|_1=\tr |\cN_i^{T_B}(X_{AB}^{T_B})|,$$
where $\cN_i^{T_B}(\sigma)=(\cN(\sigma^{T_B}))^{T_B}$. By the  fact that $\cN_i^{T_B}$ is CP  \cite{Rains1999, Rains2001}, we have 
$W(\cN_i(\rho))\le \tr |\cN_i^{T_B}(X_{AB}^{T_B})|
\le \tr \cN_i^{T_B}(|X_{AB}^{T_B}|)$.
Then, we have that
\begin{align*}
\sum_i p_iE_W(\rho_i)&\le \log_2\sum_ip_iW(\rho_i)
=\log_2\sum_iW(\cN_i(\rho))\\
&\le  \log_2\sum_i \tr \cN_i^{T_B}(|X_{AB}^{T_B}|)\\
&=\log_2\sum_i \tr [\cN_i(|X_{AB}^{T_B}|^{T_B})]^{T_B}\\
&=\log_2 \tr \cN(|X_{AB}^{T_B}|^{T_B})\\
&=\log_2 \tr |X_{AB}^{T_B}|^{T_B}
=E_W(\rho).
\end{align*}

Hence, we obtain the monotonicity of $E_W$ under general PPT operations in the sense of Eq. (\ref{E monotone}). Similar to the logarithmic negativity, one can easily conclude that $E_W$ is also a full non-convex entanglement monotone.
\end{proof}

\textit{Comparison with logarithmic negativity:}
Now we discuss property iv). 
Before that, let us recall that $\|\rho_{AB}^{T_B}\|_1$ can be reformulated as 
\begin{equation}\label{trace norm}
\|\rho_{AB}^{T_B}\|_1= \max   \tr  \rho_{AB}^{T_B}R_{AB} \ \text{ s.t. }\    \|R_{AB} \|_\infty\le 1.
\end{equation}
\begin{theorem}
For any state $\rho_{AB}$,
$E_W(\rho_{AB}) \le E_N(\rho_{AB}),$
and the inequality can be strict. Moreover, $E_W(\rho_{AB})=E_N(\rho_{AB})$ if and only if SDP (\ref{trace norm}) has an optimal solution with positive partial transpose.
\end{theorem}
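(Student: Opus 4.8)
The plan is to read off both the inequality and the equality characterization directly from the observation that the SDP (\ref{prime WN}) defining $W(\rho_{AB})$ is nothing but the trace-norm SDP (\ref{trace norm}) with one extra constraint. Both programs maximize the \emph{same} linear functional $\tr \rho_{AB}^{T_B} R_{AB}$ over Hermitian $R_{AB}$ subject to $\|R_{AB}\|_\infty \le 1$ (equivalently $-\1 \le R_{AB} \le \1$); the only difference is that (\ref{prime WN}) additionally imposes $R_{AB}^{T_B}\ge 0$. Hence the feasible set of $W(\rho_{AB})$ is a subset of that of $\|\rho_{AB}^{T_B}\|_1$, and maximizing an identical objective over a smaller set can only lower the optimum. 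This yields $W(\rho_{AB}) \le \|\rho_{AB}^{T_B}\|_1$ at once, and applying $\log_2$ gives $E_W(\rho_{AB}) \le E_N(\rho_{AB})$. I would also record that both feasible sets are compact and the objective continuous, so both maxima are attained; this attainment is what makes the equality argument clean.

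For the equality characterization I would prove both implications from the same inclusion. If (\ref{trace norm}) admits an optimal $R^\star$ with $(R^\star)^{T_B}\ge 0$, then $R^\star$ is already feasible for (\ref{prime WN}) and attains there the value $\|\rho_{AB}^{T_B}\|_1$, forcing $W(\rho_{AB}) \ge \|\rho_{AB}^{T_B}\|_1$; combined with the inequality above this gives $W(\rho_{AB}) = \|\rho_{AB}^{T_B}\|_1$, i.e. $E_W = E_N$. Conversely, if $E_W(\rho_{AB}) = E_N(\rho_{AB})$, pick an optimal solution $R_W$ of (\ref{prime WN}) (attained by compactness). Since $\|R_W\|_\infty \le 1$, it is feasible for (\ref{trace norm}), and it achieves $\tr \rho_{AB}^{T_B} R_W = W(\rho_{AB}) = \|\rho_{AB}^{T_B}\|_1$, so it is in fact optimal for (\ref{trace norm}); by construction $R_W^{T_B}\ge 0$. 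Thus (\ref{trace norm}) has an optimal solution with positive partial transpose, and the equivalence follows.

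The remaining claim, that the inequality can be strict, is the only part that is not formal, and I expect the main obstacle to lie here. A convenient way to organize it is to use the explicit description of the optimizers of (\ref{trace norm}): since $R_{AB}$ is Hermitian with $\|R_{AB}\|_\infty\le 1$, any optimizer must act as $+\1$ on the range $P_+$ of the positive part of $\rho_{AB}^{T_B}$ and as $-\1$ on the range $P_-$ of its negative part, and is free (operator norm at most one) only on $\ker \rho_{AB}^{T_B}$; that is, every optimal $R$ has the form $R = P_+ - P_- + R_0$ with $R_0$ Hermitian on the kernel and $\|R_0\|_\infty\le 1$. Consequently the condition $E_W=E_N$ reduces to a finite-dimensional feasibility SDP, namely whether some such $R_0$ makes $(P_+ - P_- + R_0)^{T_B}\ge 0$, and strictness amounts to exhibiting a state for which this feasibility fails. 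I would therefore present one explicit, genuinely mixed state and verify, either analytically or by solving the small SDP on the kernel block, that no PPT completion of the sign operator of $\rho_{AB}^{T_B}$ exists. Establishing this non-existence is the genuinely content-bearing step, whereas the inequality itself and the equality characterization are immediate consequences of the feasible-set inclusion.
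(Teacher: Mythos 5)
Your treatment of the inequality and of the equality characterization is correct. For the inequality you take a slightly different route from the paper: you observe that the primal SDP (\ref{prime WN}) is exactly the trace-norm SDP (\ref{trace norm}) with the additional constraint $R_{AB}^{T_B}\ge 0$, so the maximum over the smaller feasible set can only be smaller; the paper instead notes that $X_{AB}=\rho_{AB}$ is feasible for the dual SDP (\ref{dual 2 WN}), giving $W(\rho_{AB})\le\|\rho_{AB}^{T_B}\|_1$ in one line. Both arguments are immediate, and your feasible-set inclusion has the advantage of making the two directions of the ``if and only if'' symmetric; the two implications themselves you argue exactly as the paper does.

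The genuine gap is the claim that the inequality can be strict: you describe a strategy but never exhibit a state, so this part of the theorem is simply not proved. Moreover, the route you sketch is harder than necessary. Your structural reduction is correct as far as it goes---any optimizer of (\ref{trace norm}) must act as $P_+-P_-$ on the support of $\rho_{AB}^{T_B}$ and is free only on its kernel---but it converts strictness into a universally quantified infeasibility statement (no admissible $R_0$ on the kernel makes $(P_+-P_-+R_0)^{T_B}\ge 0$), which is awkward to certify analytically. The natural tool is instead the dual SDP (\ref{dual 2 WN}): strictness follows from a single existential certificate, namely any $X_{AB}\ge\rho_{AB}$ with $\|X_{AB}^{T_B}\|_1<\|\rho_{AB}^{T_B}\|_1$, which is verified by direct computation. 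This is precisely what the paper does for the $3\ox 3$ states $\rho^{(\a)}$ of Proposition \ref{EW tight}, where the feasible point $X_{AB}=\rho^{(\a)}+\frac{\sqrt{\a(1-\a)}}{3}(\proj{00}+\proj{11}+\proj{22})$ yields $E_W(\rho^{(\a)})\le\log_2\bigl(1+\sqrt{\a(1-\a)}\bigr)<\log_2\bigl(1+\tfrac{4}{3}\sqrt{\a(1-\a)}\bigr)=E_N(\rho^{(\a)})$; within the proof of the theorem itself the paper only points to a numerical comparison for the two-qubit family $\sigma^{(r)}$. Your instinct that this is the content-bearing step is right, but to close it you must actually produce a state together with a checkable certificate, and the dual formulation is the way to get one.
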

\begin{proof}
The definition of $E_N$ is given in Eq. (\ref{EN}).
Noting that $\rho_{AB}$ is a feasible solution to the dual SDP (\ref{dual 2 WN}) of $W(\rho_{AB}) $, we have
$E_W(\rho_{AB})\le \log \|\rho_{AB}^{T_B}\|_1= E_N(\rho_{AB})$.

To see  the above inequality can be strict, we focus on a class of two-qubit states 
$\sigma^{(r)}=r\proj{v_0}+(1-r)\proj{v_1}$ $(0<r<1)$, 
where $\ket{v_0}={1}/{\sqrt 2}(\ket {10}-\ket{11})$ and $\ket{v_1}={1}/{\sqrt 3}(\ket {00}+\ket{10}+\ket{11})$. The fact that $E_W(\sigma^{(r)})$ can be strictly smaller than $E_N(\sigma^{(r)})$ is shown in FIG. \ref{comp1}.

To prove the second part of the theorem, let us assume that the optimal solution to SDP (\ref{trace norm}) of $\|\rho_{AB}^{T_B}\|_1$ is $R_{AB}$. If $R_{AB}^{T_B}\ge0$, then it is also a feasible solution to the primal SDP (\ref{prime WN}) of $W(\rho_{AB})$. That immediately implies $E_W(\rho_{AB})=E_N(\rho_{AB})$. Conversely, assume that $E_W(\rho_{AB})=E_N(\rho_{AB})$, then the optimal solution $R_{AB}$ to SDP (\ref{prime WN}) of $W(\rho_{AB})$ is also the optimal solution to the SDP (\ref{trace norm}) for $\|\rho_{AB}^{T_B}\|_1$ and it holds that $R_{AB}^{T_B}\ge0$.  Therefore, $E_W(\rho_{AB})= E_N(\rho_{AB})$ if and only if  SDP (\ref{trace norm}) for $\|\rho_{AB}^{T_B}\|_1$ has a PPT optimal solution.
\begin{figure}[h]
\subfigure{\includegraphics[width=4.9cm]{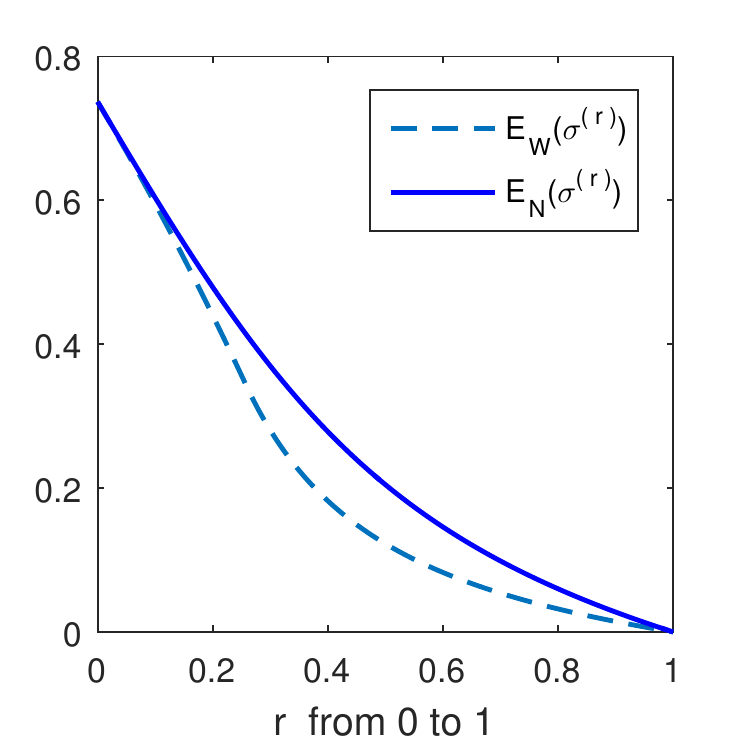}}
\caption{This plot presents the comparison of different upper bounds on $E_\G(\sigma^{(r)})$ (and $E_D(\sigma^{(r)})$ as well). The dashed line depicts  $E_W(\sigma^{(r)})$ while the solid line depicts $E_N(\sigma^{(r)})$.}
\label{comp1}
\end{figure}
\end{proof}

We further compare $E_W$ to $E_\G$ and $E_N$ by using a class of $3\otimes 3$ states defined by 
$$\rho^{(\a)}=\sum_{m=0}^{2}\proj{\psi_m}/3 \ (0<\a \le 0.5)$$
 with $\ket{\psi_0}=\sqrt{\a}\ket{01}+\sqrt{1-\a}\ket{10}$,   $\ket{\psi_1}=\sqrt{\a}\ket{02}+\sqrt{1-\a}\ket{20}$ and $\ket{\psi_2}=\sqrt{\a}\ket{12}+\sqrt{1-\a}\ket{21}$.  

\begin{proposition}\label{EW tight}
For the class of states $\rho^{(\a)}$, we have that
$$E_{\Gamma}(\rho^{(\a)}) \le E_W(\rho^{(\a)}) < E_N(\rho^{(\a)}).$$
In particular, 
$$E_{\Gamma}(\rho^{(0.5)}) = E_W(\rho^{(0.5)})=\log_2 \frac{3}{2} < \log_2 \frac{5}{3}=E_N(\rho^{(0.5)}).$$
\end{proposition}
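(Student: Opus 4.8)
The left inequality $E_{\Gamma}(\rho^{(\a)})\le E_W(\rho^{(\a)})$ is exactly Theorem \ref{DPPT EW}, so the real work is to (a) separate $E_W$ from $E_N$ for all $\a$ and (b) compute both ends exactly at $\a=1/2$. The plan is to exploit the orthogonal splitting $\CC^3\ox\CC^3=\cS\oplus\cD\oplus\cA$, where $\cS=\text{span}\{\ket{\psi_0},\ket{\psi_1},\ket{\psi_2}\}$ (the symmetric off-diagonal sector, which carries $\rho^{(\a)}$), $\cD=\text{span}\{\ket{00},\ket{11},\ket{22}\}$, and $\cA$ is the antisymmetric subspace. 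First I would compute $\rho^{(\a)T_B}$, which is block diagonal in this splitting: it acts as $\tfrac{\a}{3}$ and $\tfrac{1-\a}{3}$ on the six off-diagonal product vectors and as $\tfrac{\sqrt{\a(1-\a)}}{3}(J-\1)$ on $\cD$, with $J$ the all-ones matrix (eigenvalues $2,-1,-1$). Summing the moduli of all nine eigenvalues gives $\|\rho^{(\a)T_B}\|_1=1+\tfrac{4}{3}\sqrt{\a(1-\a)}$, so $E_N(\rho^{(\a)})=\log_2\!\big(1+\tfrac43\sqrt{\a(1-\a)}\big)$, which is $\log_2\tfrac53$ at $\a=1/2$.

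For the strict separation I would invoke the characterization proved just above: $E_W=E_N$ holds iff SDP (\ref{trace norm}) for $\|\rho^{(\a)T_B}\|_1$ admits an optimal solution with positive partial transpose. Because $\rho^{(\a)T_B}$ is nonsingular for $0<\a<1$, the maximizer of (\ref{trace norm}) is the unique sign operator $R^{\ast}=P_{+}-P_{-}$, where $P_{-}$ projects onto the negative eigenspace of $\rho^{(\a)T_B}$, i.e. the traceless part of $\cD$. A short computation using $\proj{v}^{T_B}=\tfrac13 F$ (with $F$ the swap and $\ket v=\tfrac{1}{\sqrt3}(\ket{00}+\ket{11}+\ket{22})$) shows $(R^{\ast})^{T_B}=-\tfrac13\1$ on $\cD$, hence $(R^{\ast})^{T_B}\not\ge 0$. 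Since $R^{\ast}$ is the only optimal solution of (\ref{trace norm}), no PPT optimizer exists, and therefore $E_W(\rho^{(\a)})<E_N(\rho^{(\a)})$ for every $0<\a\le 1/2$.

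It remains to evaluate the two bounds at $\a=1/2$. For the upper bound $E_W(\rho^{(1/2)})\le\log_2\tfrac32$ I would exhibit a dual-feasible point of SDP (\ref{dual 2 WN}): take $X=\rho^{(1/2)}+\tfrac12\proj{v}$, which obeys $X\ge\rho^{(1/2)}$. Using $\proj{v}^{T_B}=\tfrac13 F$, the operator $X^{T_B}$ is diagonal in the splitting with eigenvalues $\tfrac13$ on $\cS$, $0$ on $\cA$, $\tfrac12$ on $\CC\ket v$, and $0$ on the traceless part of $\cD$; hence $\|X^{T_B}\|_1=\tfrac32$ and $W(\rho^{(1/2)})\le\tfrac32$. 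For the matching distillation lower bound I would show $F_{\Gamma}(\rho^{(1/2)},3/2)=1$ by testing $Q=P_{\cS}+\tfrac13 P_{\cA}$ in SDP (\ref{F PPT}): one checks $0\le Q\le\1$, that $Q^{T_B}$ has eigenvalues $\tfrac23$ and $-\tfrac13$ so $\|Q^{T_B}\|_\infty=\tfrac23=1/k$, and $\tr\rho^{(1/2)}Q=1$ since $\rho^{(1/2)}$ lives on $\cS$. The super-multiplicativity $F_{\Gamma}(\rho\ox\sigma,k\ell)\ge F_{\Gamma}(\rho,k)F_{\Gamma}(\sigma,\ell)$, obtained by tensoring feasible $Q$'s, then gives $F_{\Gamma}((\rho^{(1/2)})^{\ox n},(3/2)^n)=1$ for all $n$, whence $E_{\Gamma}(\rho^{(1/2)})\ge\log_2\tfrac32$. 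Combining with $E_{\Gamma}\le E_W\le\log_2\tfrac32$ collapses the chain to $E_{\Gamma}(\rho^{(1/2)})=E_W(\rho^{(1/2)})=\log_2\tfrac32<\log_2\tfrac53=E_N(\rho^{(1/2)})$.

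The main obstacle is part (b): finding the two witnesses and checking they are jointly optimal. The guiding observation that makes this tractable is that the maximally entangled vector $\ket v$ lies in $\cD$ and is precisely the direction in which $\rho^{(1/2)T_B}$ is most positive; the dual witness adds back $\tfrac12\proj{v}$ to annihilate the negativity of $\rho^{(1/2)T_B}$, while the $F_{\Gamma}$ witness uses the antisymmetric sector $\cA$ to pull the partial-transpose operator norm down to exactly $2/3$. Everything else — the spectra above, super-multiplicativity, and the bound $F_{\Gamma}\le 1$ — is routine; the delicate point is simply that these hand-built feasible points meet, which the explicit eigenvalues confirm.
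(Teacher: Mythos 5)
Your proof is correct, and its skeleton matches the paper's: the same eigenvalue computation gives $E_N(\rho^{(\a)})=\log_2\bigl(1+\tfrac{4}{3}\sqrt{\a(1-\a)}\bigr)$; your distillation witness $Q=P_{\cS}+\tfrac13 P_{\cA}$ is exactly the operator $\sum_m\bigl(\proj{\psi_m}+\tfrac13\proj{\widehat\psi_m}\bigr)$ that the paper feeds into SDP (\ref{F PPT}); and the final sandwich $E_\Gamma\le E_W\le\log_2\tfrac32\le E_\Gamma$ is identical. (You are more explicit than the paper about the super-multiplicativity of $F_\Gamma$ under tensoring feasible solutions; the paper compresses this asymptotic step into ``applying the definition of $E_\Gamma$,'' so your explicitness is a plus.) Where you genuinely diverge is the strict inequality $E_W<E_N$ for general $\a$. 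The paper proves it quantitatively, exhibiting the dual-feasible point $X=\rho^{(\a)}+\tfrac{\sqrt{\a(1-\a)}}{3}(\proj{00}+\proj{11}+\proj{22})$ in SDP (\ref{dual 2 WN}), which yields $E_W(\rho^{(\a)})\le\log_2\bigl(1+\sqrt{\a(1-\a)}\bigr)<E_N(\rho^{(\a)})$ for all $0<\a\le\tfrac12$ and, as a byproduct, already supplies the bound $E_W(\rho^{(0.5)})\le\log_2\tfrac32$ needed at the end. You instead invoke the characterization in the theorem immediately preceding the proposition ($E_W=E_N$ iff SDP (\ref{trace norm}) admits a PPT optimal solution) together with a uniqueness argument: since $(\rho^{(\a)})^{T_B}$ is nonsingular, any optimizer $R$ of (\ref{trace norm}) must have $\bra{e_i}R\ket{e_i}=\operatorname{sign}(\lambda_i)$ on an eigenbasis, and $\|R\|_\infty\le 1$ then forces $R\ket{e_i}=\pm\ket{e_i}$, so the sign operator $R^\ast=\1-2P_-$ is the unique optimizer; its partial transpose restricts to $-\tfrac13\1$ on $\cD$, so no PPT optimizer exists. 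This is valid and more conceptual---it shows the iff criterion has teeth---but it buys only strictness, not a numerical bound, which is why you then need a separate dual witness at $\a=\tfrac12$, namely $X=\rho^{(1/2)}+\tfrac12\proj{v}$ (different from the paper's choice, but it also gives $\|X^{T_B}\|_1=\tfrac32$, as your spectral computation confirms). In short: both proofs are complete; the paper's route is shorter and yields an explicit upper bound on $E_W(\rho^{(\a)})$ for every $\a$, while yours demonstrates how the structural $E_W=E_N$ criterion can be deployed and makes the asymptotic step fully rigorous.
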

\begin{proof}
Firstly, 
it is not difficult to see that
\begin{equation} \label{eq EN}
E_N(\rho^{(\a)})=\log_2\|(\rho^{(\a)})^{T_B} \|_1=  \log_2(1+{4}/{3}\sqrt{\a(1-\a)}). 
\end{equation}

Secondly, we can choose $X_{AB}=\rho^{(\a)}+{\sqrt{\a(1-\a)}}/{3}(\proj{00}+\proj{11}+\proj{22})$ as a feasible solution to the dual SDP (\ref {dual 2 WN}). By a routine calculation, we have
\begin{equation}\label{eq EW}
\begin{split}
E_W(\rho^{(\a)}) &=\log_2 W(\rho^{(\a)}) \le \log_2 \|X_{AB}^{T_B}\|_1\\
&=\log_2 (1+\sqrt{\a(1-\a)})<E_N(\rho^{(\a)}).
\end{split} \end{equation}

For $\a=0.5$, choose $k_0={3}/{2}$ and $Q_{AB}=\sum_{m=0}^{2}( \proj{\psi_m}+{1}/{3}\proj{\widehat\psi_m})$ with 
$\ket {\widehat \psi_0} =  1/\sqrt{2}(\ket{01}  -\ket{10})$, $\ket {\widehat \psi_1} =  1/\sqrt{2}(\ket{02}  -\ket{20})$ and $\ket {\widehat \psi_2} =  1/\sqrt{2}(\ket{12}  -\ket{21})$.
Noticing that $\|Q_{AB}^{T_B}\|_\infty={2}/{3}$, we have $-{1}/{k_0}\1 \le Q_{AB}^{T_{B}} \le  {1}/{k_0}\1$. Thus $Q_{AB}$ is a feasible solution to the SDP (\ref{F PPT}) of $F_{\Gamma}(\rho^{(0.5)},k_0)$, which has an optimal value $1$ due to  
$1\geq F_{\Gamma}(\rho^{(0.5)},k_0)\geq \tr  \rho^{(0.5)}Q_{AB}=1$.
Applying the definition of $E_\G$, we have
\begin{equation}  \label{eq D}
E_{\Gamma}(\rho^{(0.5)}) \ge\log_2 k_0=\log_2{3}/{2}.
\end{equation}

Finally, combining Eqs. (\ref  {eq EN}), (\ref {eq EW}), and (\ref {eq D}), we obtain the desired chain of inequalities. 
\end{proof}

\begin{remark}
It is worth pointing out that $\rho^{(0.5)}$ is supporting on the symmetric subspace $\rm{span}\{\ket{01}+\ket{10}, \ket{02}+\ket{20},\ket{12}+\ket{21}\}$, which looks quite similar to but is  actually not locally unitarily equivalent to the antisymmetric subspace $\rm{span}\{\ket{01}-\ket{10}, \ket{02}-\ket{20},\ket{12}-\ket{21}\}$. In particular, for the corresponding $3\ox 3$ antisymmetric state $\sigma_3$, we have $E_\G(\sigma_3)=E_W(\sigma_3)=E_N(\sigma_3)=\log_2(5/3)$. 
 \end{remark}

\textit{PPT deterministic distillable entanglement:}
The deterministic entanglement distillation concerns about how to distill maximally entangled states exactly.  The bipartite pure state case is completely solved in Refs. \cite{Matthews2008, DFJY2004}. We will show that PPT deterministic distillable entanglement of a state $\rho_{AB}$ depends only on the support $\supp(\rho_{AB})$, which is defined to be the space spanned by the eigenvectors with positive eigenvalues of $\rho_{AB}$. The one-copy PPT deterministic distillable entanglement of  $\rho_{AB}$ is defined by 
$$
        E_{\Gamma,0}^{(1)}(\rho_{AB})
        := \max \left\{  \log_2 k
        : F_\Gamma(\rho_{AB}, k) = 1, k >0 \right\}.
$$
Clearly $E_{\Gamma,0}^{(1)}(\rho_{AB})\geq 0$ since $F_\Gamma(\rho_{AB}, 1)=1$ trivially holds. The asymptotic PPT deterministic distillable entanglemen of $\rho$ is given by
$$E_{\Gamma,0}(\rho_{AB}):=\sup_{n\ge 1}{E_{\Gamma,0}^{(1)}(\rho^{\otimes n})}/{n}=\lim_{n\ge 1}{E_{\Gamma,0}^{(1)}(\rho_{AB}^{\otimes n})}/{n}.$$ 

Replacing $k$ and $Q_{AB}$ in SDP (\ref{F PPT}) by $\tr  \rho_{AB} R_{AB}$ and $R_{AB}/\tr  \rho_{AB} R_{AB}$, respectively, we can further simplify 
$E_{\Gamma,0}^{(1)}(\rho_{AB})$ to $\log_2 W_0({\rho_{AB}})$ such that
\begin{equation}\label{prime W0}
\begin{split}
W_0(\rho_{AB})&= \max   \tr  \rho_{AB} R_{AB}, \\
\phantom{W(\rho) }\text{ s.t. }  &  0\le R_{AB}\le (\tr  \rho_{AB} R_{AB})\1_{AB},\\ 
\phantom{W(\rho) }  & |R_{AB}^{T_{B}}| \le  \1_{AB}.
\end{split}\end{equation}
The first constraint in SDP (\ref{prime W0}) implies that $\tr  \rho_{AB} R_{AB}\ge \|R_{AB}\|_\infty$. So any feasible $R_{AB}$ should be of the form $xP_{AB}+S_{AB}$, where $x\geq 0$, $P_{AB}$ is the projection onto $\supp(\rho_{AB})$, and $0\leq S_{AB}\leq x(\1-P)_{AB}$.  
Replacing $S_{AB}/x+P_{AB}$ by $R_{AB}$ and noticing $E_{\Gamma,0}^{(1)}(\rho_{AB})=\log_2 W_0(\rho_{AB})$, we have 
\begin{equation}\label{prime NW1}
\begin{split}
E_{\Gamma,0}^{(1)}(\rho_{AB})&= \max_R -\log_2 \|R_{AB}^{T_B}\|_\infty, \\
\phantom{W(\rho) }\text{ s.t. }  &  P_{AB}\le R_{AB}\le \1_{AB}.
\end{split}
\end{equation}
In particular, $E_{\Gamma,0}^{(1)}(\rho_{AB})\geq -\log_2 \|P_{AB}^{T_B}\|_\infty$ when $R_{AB}=P_{AB}$. For bipartite pure entangled states this lower bound gives the exact value of the PPT deterministic distillable entanglement \cite{Matthews2008, DFJY2004}. However, this is not the case for general mixed bipartite states such as $\rho^{(0.5)}$. Clearly we have 
$$E_{\Gamma,0}^{(1)}\leq E_{\Gamma,0}\leq E_{\G}\leq E_{W}\leq E_{N},$$ and for $\rho^{(0.5)}$, and the first three inequalities become an equality while the last one is strict. 
Recently, the SDP (\ref{prime NW1}) of $E_{\Gamma,0}^{(1)}$ was used to evaluate the PPT distillable entanglement of the rank-$2$ antisymmetric state \cite{Wang2016d}.

\textit{Conclusions:}
We present a new and improved SDP-computable upper bound $E_W$ to the distillable entanglement. This quantity enjoys additional good properties such as additivity and monotonicity under both general LOCC  (or PPT) operations. $E_W$ has almost all of good properties of logarithmic negativity  while can provide a more accurate estimation of the distillable entanglement and it has been recently used to give a SDP-computable sufficient condition of the irreversibility of asymptotic entanglement manipulation under PPT operations \cite{Wang2016d}.
We also show that the PPT deterministic distillable entanglement depends only on the support of the state and provides a refined SDP for the one-copy rate, which is a natural lower bound of the PPT distillable entanglement. 

One interesting open problem is whether $E_W(\rho^{(0.5)})$ in Proposition \ref{EW tight} is achievable by LOCC.  We hope that this SDP-computable entanglement measure would be useful in studying other problems in quantum information theory.

We were grateful to A. Winter and Y. Huang for helpful suggestions and M. Plenio and J. Eisert for communicating references to us. This work was partly supported by the Australian Research Council (Grant Nos. DP120103776 and FT120100449) and the National Natural Science Foundation of China (Grant No. 61179030).


\begin{thebibliography}{1}
\bibitem{Plenio2005a}
M. B. Plenio and S. Virmani,
\emph{Quantum Inf. Comput.} \textbf{7}, 1 (2007).

\bibitem{Horodecki2009a} 
R. Horodecki, P. Horodecki, M. Horodecki, and K. Horodecki, 
\emph{Rev. Mod. Phys.} \textbf{81}, 865 (2009).

\bibitem{Bennett1996a}
C. H. Bennett, D. P. DiVincenzo, J. A. Smolin, and W. K. Wootters, 
\emph{Phys. Rev. A} \textbf{54}, 3824 (1996).

\bibitem{Rains1999a}
E. M. Rains, 
\emph{Phys. Rev. A} \textbf{60}, 173 (1999).

\bibitem{Bennett1992} C. H. Bennett and S. J. Wiesner, 
\emph{Phys. Rev. Lett.} \textbf{69}, 2881 (1992).

\bibitem{Bennett1993} C. H. Bennett, G. Brassard, C. Cr\'{e}peau, R. Jozsa, A. Peres, and W. K. Wootters, 
\emph{Phys. Rev. Lett.} \textbf{70}, 1895 (1993).

\bibitem{Gisin2002}
N. Gisin, G. Ribordy, W. Tittel, and H. Zbinden, \emph{Rev. Mod. Phys.} \textbf{74}, 145 (2002).

\bibitem{Zyczkowski1998}
K. ŻZyczkowski, P. Horodecki, A. Sanpera, and M. Lewenstein, 
\emph{Phys. Rev. A} \textbf{58}, 883 (1998).

\bibitem{Vidal2002}
G. Vidal and R. F. Werner, 
\emph{Phys. Rev. A} \textbf{65}, 032314 (2002).

\bibitem{Eisert2006}
J. Eisert, Ph.D. thesis, University of Potsdam, 2001. 

\bibitem{Plenio2005b} 
M. B. Plenio, 
\emph{Phys. Rev. Lett.} \textbf{95}, 090503 (2005).

\bibitem{Rains2001}
E. M. Rains, 
\emph{IEEE Trans. Inf. Theory} \textbf{47}, 2921 (2001).

\bibitem{WD16} 
X. Wang and R. Duan, arXiv:1605.00348.

\bibitem{Vedral1998}
V. Vedral and M. B. Plenio, 
\emph{Phys. Rev. A} \textbf{57}, 1619 (1998).

\bibitem{Rains1999} 
E. M. Rains, 
\emph{Phys. Rev. A} \textbf{60}, 179 (1999).

\bibitem{Horodecki2000a} 
M. Horodecki, P. Horodecki, and R. Horodecki, 
\emph{Phys. Rev. Lett.} \textbf{84}, 4260 (2000).

\bibitem{Christandl2004} 
M. Christandl and A. Winter, 
\emph{J. Math. Phys.} \textbf{45}, 829 (2004).

\bibitem{Huang2014}
Y. Huang, 
\emph{New J. Phys.}, \textbf{16}, 33027 (2014).

\bibitem{Vollbrecht2001} 
K. G. H. Vollbrecht and R. F. Werner, 
\emph{Phys. Rev. A} \textbf{64}, 062307 (2001).

\bibitem{Terhal2000} 
B. M. Terhal and K. G. H. Vollbrecht, 
\emph{Phys. Rev. Lett.} \textbf{85}, 2625 (2000).

\bibitem{Bennett1999b} C. H. Bennett, D. P. DiVincenzo, C. A. Fuchs, T. Mor, E. Rains, P. W. Shor, J. A. Smolin, and W. K. Wootters, 
\emph{Phys. Rev. A} \textbf{59}, 1070 (1999).

\bibitem{Vandenberghe1996} 
L. Vandenberghe and S. Boyd, 
\emph{SIAM Rev.} \textbf{38}, 49 (1996).

\bibitem{Khachiyan1980}
L. G. Khachiyan, 
\emph{USSR Comput. Math. Phys.} \textbf{20}, 53 (1980).

\bibitem{Grant2008}
M. Grant and S. Boyd,
\newblock CVX: Matlab software for disciplined convex programming, 
\newblock \url{http://cvxr.com} (2014).

\bibitem{Watrous2011b} 
J. Watrous, 
\emph{Theory of Quantum Information}, University of Waterloo, 2001. 

\bibitem{Matthews2008}
W. Matthews and A. Winter, 
\emph{Phys. Rev. A} \textbf{78}, 012317 (2008).

\bibitem{Slater2014}
M. Slater, in \emph{Traces and Emergence of Nonlinear Programming}, edited by G. Giorgi and H. T. Kjeldsen (Springer Basel, 2014) pp. 293--306.


\bibitem{DFJY2004}
R. Duan, Y. Feng, Z. Ji, and M. Ying, \emph{Phys. Rev. A} \textbf{71}, 022305 (2005).

\bibitem{Eggeling2001}
T. Eggeling, K. G. H. Vollbrecht, R. F. Werner, and M. M. Wolf, 
\emph{Phys. Rev. Lett.} \textbf{87}, 257902 (2001).

\bibitem{Wang2016d}
X. Wang and R. Duan, arXiv:1606.09421.


\end{thebibliography}

\end{document}